\def\BibTeX{{\rm B\kern-.05em{\sc i\kern-.025em b}\kern-.08em
		T\kern-.1667em\lower.7ex\hbox{E}\kern-.125emX}}
\pgfplotsset{
	tick label style={font=\small},
	label style={font=\small},
	legend style={font=\small}
}
\newcommand\abs[1]{\big|#1\big|}
\newcommand{\Tcp}{T_{\mathrm{cp}}}
\newcommand{\Ts}{T_{\mathrm{s}}}
\newcommand{\dtx}{d_{\mathrm{tx}}}
\newcommand{\drx}{d_{\mathrm{rx}}}
\newcommand{\dbis}{d_{\mathrm{bis}}}
\newcommand{\vbis}{v_{\mathrm{bis}}}
\newcommand{\Htilde}{\widetilde{H}}
\newcommand{\Ztilde}{\widetilde{Z}}
\newcommand{\Ytilde}{\widetilde{Y}}
\newcommand{\mub}{\boldsymbol{\mu}}
\newcommand{\etab}{\boldsymbol{\eta}}
\newcommand{\hermit}{\mathsf{H}}
\newcommand{\Jb}{\boldsymbol{\mathrm{J}}}
\newcommand{\aP}{\abs{\mathcal{P}}}
\newcommand{\tcb}[1]{\textcolor{black}{#1}}
\newcommand{\WRFb}{\boldsymbol{W}_{\text{RF}}}
\newcommand{\WSb}{\boldsymbol{W}_{\text{S}}}
\newcommand{\FRFb}{\boldsymbol{F}_{\text{RF}}}
\newcommand{\Ab}{\boldsymbol{A}}
\newcommand{\Bb}{\boldsymbol{B}}
\newcommand{\Cb}{\boldsymbol{C}}
\newcommand{\Db}{\boldsymbol{D}}
\newcommand{\pb}{\boldsymbol{p}}
\renewcommand\d[1]{\ensuremath{%
		\;\mathrm{d}#1\@ifnextchar\d{\!}{}}}
\newcommand*\rel@kern[1]{\kern#1\dimexpr\macc@kerna}
\newcommand*\widebar[1]{%
	\begingroup
	\def\mathaccent##1##2{%
		\rel@kern{0.8}%
		\overline{\rel@kern{-0.8}\macc@nucleus\rel@kern{0.2}}%
		\rel@kern{-0.2}%
	}%
	\macc@depth\@ne
	\let\math@bgroup\@empty \let\math@egroup\macc@set@skewchar
	\mathsurround\z@ \frozen@everymath{\mathgroup\macc@group\relax}%
	\macc@set@skewchar\relax
	\let\mathaccentV\macc@nested@a
	\macc@nested@a\relax111{#1}%
	\endgroup
}
\theoremstyle{remark}
\newtheoremstyle{mytheoremstyle} 
{\topsep}                    
{\topsep}                    
{\upshape}                   
{.5em}                           
{\itshape}                   
{.}                          
{.5em}                       
{}  
\theoremstyle{mytheoremstyle}
\newtheoremstyle{iremark}
{\topsep}   
{\topsep}   
{\upshape}  
{0.2in}       
{\itshape}  
{.}         
{5pt plus 1pt minus 1pt} 
{\thmname{#1}\thmnumber{ \itshape#2}\thmnote{ (#3)}} 
\theoremstyle{plain}
\newtheorem{prop}{Proposition}
\newtheorem{rem}{Remark}
\begin{document}
	\title{Impact of the Pilot Design for OFDM Based Bi-static Integrated Sensing and Communication System}
	\thispagestyle{plain}

	\author{Cuneyd Ozturk  \emph{Member, IEEE}, 
		and Cagri Goken \emph{Member, IEEE}. 
		\thanks{Authors are with Aselsan Inc., Ankara, 06800, Turkey (E-mail: cuneydozturk@aselsan.com, cgoken@aselsan.com)
			
		{More concise version of this manuscript has been submitted to IEEE Wireless Communication Letters.}
	}}
	\maketitle
	\begin{abstract}
		A bistatic millimeter-wave (mmWave) ISAC system utilizing OFDM signaling is considered. 
		For a single-target scenario, closed-form expressions for the Cramer-Rao bounds (CRBs) of range and velocity estimation are derived for a given pilot pattern. The analysis shows that when the target's range and velocity remain within the maximum unambiguous limits, allocating pilot symbols more frequently in time improves position estimation, while increasing their density in frequency enhances velocity estimation. Numerical results further validate that the least squares (LS) channel estimation approach closely follows CRB predictions, particularly in the high-SNR regime.
	\end{abstract}
	
	\begin{IEEEkeywords} 
		Integrated sensing and communication (ISAC), OFDM, bi-static ISAC, Cramer-Rao bound (CRB), least square (LS) channel estimate.
	\end{IEEEkeywords}

	\section{Introduction}\label{sec:Intro}

	Recent studies highlight the significant potential of OFDM-based ISAC systems \cite{Gonzalez2024ISAC, Keskin2021ICIFoe, Keskin2021LimitedFeedForward, keskin2024fundamentaltradeoffsmonostaticisac, Swindlehurst2024Joint3D, Swindlehurst2024MIMOISAC, bacchielli2024bistaticsensingthzfrequencies, Pucci2022_Bistatic, brunner2024bistaticofdmbasedisacovertheair, Molisch2024ISACBistaticRadar}. Several of these works \cite{Gonzalez2024ISAC, Keskin2021ICIFoe, Keskin2021LimitedFeedForward, keskin2024fundamentaltradeoffsmonostaticisac, Swindlehurst2024Joint3D, Swindlehurst2024MIMOISAC} focus on the monostatic configuration, where the transmitter and receiver are co-located, which requires full-duplex capability. In contrast, others \cite{bacchielli2024bistaticsensingthzfrequencies, Pucci2022_Bistatic, brunner2024bistaticofdmbasedisacovertheair, Molisch2024ISACBistaticRadar} explore the bi-static scenario, where the transmitter and receiver are separate, thereby eliminating self-interference concerns.

	\tcb{Typically, the transmitter designates part of its resources to pilot symbols in communication systems for channel estimation. For ISAC systems, pilot symbols can further be utilized for sensing purposes. The sensing receiver leverages these  symbols to estimate the channel between the target and receiver in the monostatic scenario, and the combined channels from the transmitter to the target and from the target to the receiver in the bistatic scenario. The remaining resources, apart from the pilot symbols, are the data symbols and assigned for communication, creating a fundamental trade-off between sensing and communication due to the allocation of pilot symbols.}

	\tcb{In the bi-static configuration, the sensing receiver can utilize the entire OFDM frame with decoding the data symbols before estimating the associated channel coefficients, which adds complexity and poses an additional challenge, particularly when the communication and radar receivers are separate. Alternatively, relying solely on pilot symbols reduces processing gain, it simplifies the receiver design.}
	%

	Since an equally spaced pilot signal limits the maximum unambiguous range and velocity, an alternative pilot design scheme is proposed using coprime and periodic stepping values for pilot indices in \cite{mei2024CoprimePeriodicPilot}. It is demonstrated that the proposed pilot design does not reduce maximum unambiguous range and velocity. In \cite{lyu2024referencesignalbasedwaveformdesign}, pilot design is studied for the monostatic configuration, where the user serves as both a downlink communication device and a sensing target. Cramer-Rao Bounds (CRBs) for velocity and range estimation of the target are derived, and the objective is to minimize their weighted sum while satisfying a communication rate constraint.
	
	In this work, we investigate the impact of the pilot symbol design on a bistatic milimeter wave (mmWave) ISAC utilizing OFDM. \tcb{We assume that bi-static radar processing is conducted only by using the pilot symbols, instead of the whole OFDM frame.} We explore the fundamental limits of parameter estimation by deriving the CRBs for range and velocity estimation under a given pilot pattern design. By analyzing the influence of pilot placement and system parameters on estimation accuracy, our findings contribute to the optimal pilot design of ISAC waveforms for improved sensing performance. Furthermore, numerical results confirm that when  2D-FFT approach is applied to least square (LS) channel estimates, the range and velocity estimation errors closely follow CRB trends, particularly in the high-SNR regime.
	\section{System Model}
	
	\subsection{Geometry}
	\begin{figure}
		\centering
		\includegraphics[width = 0.8\linewidth]{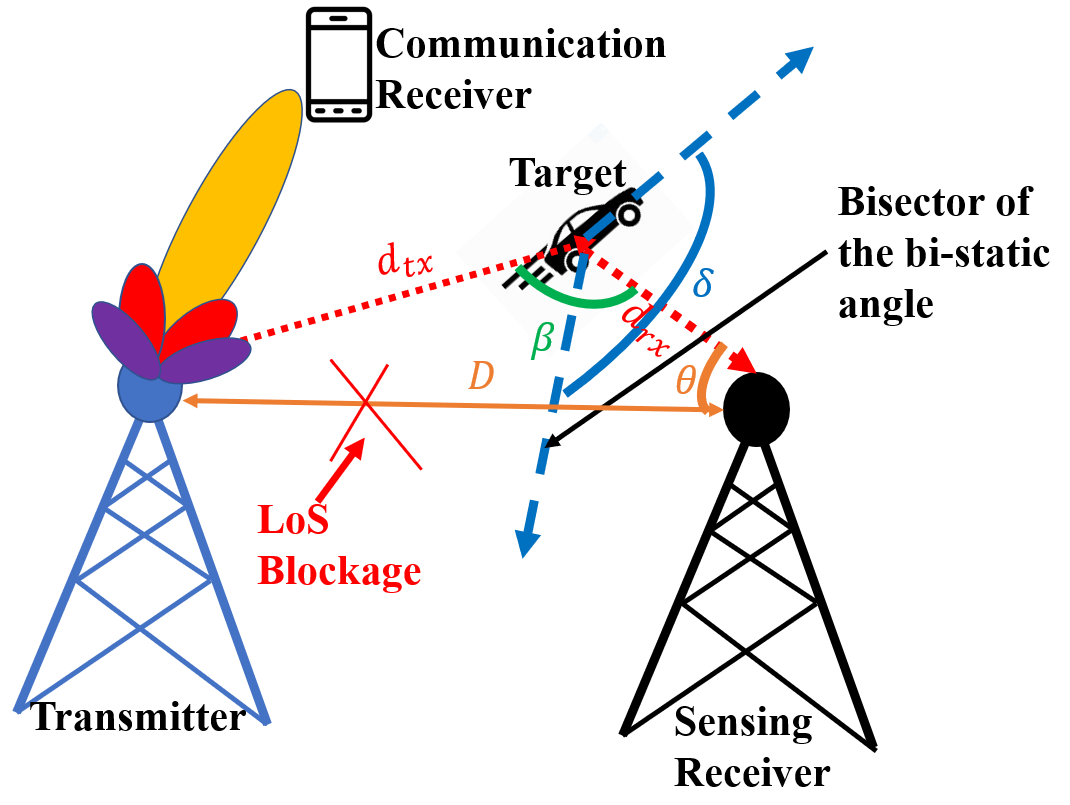}
		\caption{Considered bi-static ISAC system geometry illustrating the distances $\dtx, \drx, D$, the bi-static angle $\beta$, the angle between the target direction and the bisector of the bi-static angle $\delta$\tcb{, and AoA $\theta$}.}
		\label{fig:geometry}
	\end{figure}
	We examine a bi-static ISAC system that employs an OFDM waveform and operates at millimeter-wave frequencies. The system setup includes a transmitter, a single target, a receiver, and user equipment (UE). In particular,  the receiver aims to estimate the position of the target based on the signals reflected by the target.

	As illustrated in Fig.~\ref{fig:geometry}, we define $\dtx$, $\drx$, and $D$ as the distance between the transmitter and the target, the distance between the target and the receiver, and the distance between the transmitter and the receiver, respectively. The transmitter and receiver are stationary, and the receiver knows the distance to the transmitter.   The angle at the target subtended by the transmitter and the receiver, denoted as $\beta$ (bi-static angle), is given by $\beta = \cos^{-1}\left((\dtx^2 + \drx^2 - D^2)/(2\dtx \drx)\right)$.
	
	In addition, we define $\delta$ as the angle between the direction of the target's velocity vector and the bi-static bisector. Furthermore, $\theta$ is defined as the angle-of-arrival \tcb{(AoA)} at the receiver.  By following the notation given by \cite{Pucci2022_Bistatic}, we define $\dbis \triangleq \dtx + \drx$ as the bi-static range. The relationship between $\dbis$ and $\drx$ can be given as $\drx = (\dbis^2-D^2)/\left(2(\dbis-D\cos\theta)\right)$
	Therefore, if $D$ and $\theta$ are known at the receiver, $\drx$ can be easily estimated once $\dbis$ is estimated. 
	
	\tcb{The sensing receiver can scan the environment using analog beamforming to estimate the AoA by employing a constant false alarm rate (CFAR) detector. In this study, we assume that the AoA is obtained at the sensing receiver. Therefore, the sensing receiver aligns its beam center accordingly. Consequently, throughout this manuscript, the receiver focuses on estimating the target's range and velocity. Additionally, the transmitter directs its beam toward the communication receiver.}

	\subsection{Signal Model}
	$N$ and $M$ denote the number of OFDM subcarriers and the number of OFDM symbols \tcb{in a communication frame}, respectively. $\Delta f$ denotes the subcarrier spacing, and $\Ts$ is the OFDM symbol duration including the cyclic prefix (CP). In particular, $\Ts = T + \Tcp$, where $T = 1/\Delta f$ and $\Tcp$ is the duration of the CP. 

	\subsubsection{Modulation Symbols}
	We assume that a single data stream is transmitted and we define $X_{n,m}$ as the modulation symbol corresponding to the $n$-th subcarrier and $m$-th OFDM symbol. Throughout this manuscript, for any $(n,m)$ we assume $X_{n,m}$ is QPSK symbols i.e., $\abs{X_{n,m}} = 1$. We assume some of these modulation symbols are pilot symbols and known at the receiver. In particular, we denote the set of pilot symbols as $\mathcal{P}  \triangleq\{(n,m)\mid  X_{n,m}~\text{is a pilot symbol}\}$. We define $0\le \rho \le 1$ as the ratio of the sources allocated to the pilot symbols, i.e., $\rho = \aP/(NM),$ where $\abs{\cdot}$ denotes the number of elements of its argument. 
	
	\subsubsection{Sensing Channel}
	The transmitter and receiver each have a single radio frequency (RF) \tcb{chain}, hence analog beamforming is utilized at both the transmitter and receiver\footnote{Fully digital beamforming requires a separate RF chain for each antenna element, making it impractical at mmWave frequencies due to high costs and significant power consumption \cite{Sohrabi2016BF}.}. \tcb{}
	
	Let $N_T$ and $N_R$ represent the number of antennas at the transmitter and receiver, respectively. We also assume that the line-of-sight (LoS) path between the transmitter and the receiver is blocked, and only the reflected path from the target is considered{\footnote{\tcb{If a LoS path is present, the sensing receiver can estimate the target's range by synchronizing to the LoS path, as considered in \cite{Natajara2024BistaticRadar}. However, if the LoS path is sufficiently strong, the dynamic range may constrain the sensing receiver’s capability.}}. By following the model described in \cite{Gonzalez2024ISAC} and  \cite{Sohrabi2016BF}, the overall channel from the transmitter to the receiver for the $n$-th subcarrier and the $m$-th OFDM symbol, denoted as $H_{n,m}$, can be expressed as
		\begin{align}
		H_{n,m} = \kappa e^{-j2\pi \tau n \Delta f } e^{j 2\pi f_D m \Ts} \WRFb^\hermit \WSb \FRFb ,
		\end{align}
		where $\WRFb\in\mathbb{C}^{N_R \times 1}$ is the analog precoder at the receiver, $\WSb\in \mathbb{C}^{N_R\times N_T}$ denotes the multiplication of the transmit and the receive steering vectors,  $ \FRFb\in\mathbb{C}^{N_T\times 1}$ is the analog precoder at the transmitter, $\kappa$ is the channel gain, $f_D$ and $\tau$ denote the Doppler shift and the propagation delay, respectively. The overall channel gain is defined as $\alpha\triangleq \kappa  \WRFb^\hermit \WSb \FRFb \in\mathbb{C}$.
		
		By following the notation given by \cite{Pucci2022_Bistatic}, we define $\vbis \triangleq v \cos\delta$ as the bi-static velocity, where $v$ is the amplitude of the target velocity vector and $\delta$ is the angle between the target's velocity vector and the bisector of the bi-static angle as shown in Fig.~\ref{fig:geometry}. Then, the Doppler shift is expressed as  $	f_D =  (2\vbis\cos\left(\beta/2\right))/\lambda $\cite{willis2005bistatic}.
		In addition, the propagation delay is expressed as $\tau = \dbis/c$,
		where $c = 3\times 10^8$ is the speed of light. We assume $\tau\le \Tcp$ to avoid inter-symbol interference~\tcb{(ISI)}. This implies that the maximum detectable range $d_{\text{max}}$ is simply equal to $c\times \Tcp$.
		
		The received symbol for the $n$-th subcarrier and the $m$-th OFDM symbol after OFDM demodulation and CP removal are given by
		\begin{align}
		Y_{n,m} = H_{n,m} X_{n,m} + Z_{n,m},
		\end{align}
		where $Z_{n,m}\sim\mathbb{C}\mathcal{N}(0, \sigma^2)$.

		\subsubsection{Communication Channel}
		We define $\Htilde_{n,m}$ as the communication channel \tcb{coefficient} between the transmitter and the target for the $n$-th subcarrier and the $m$-th OFDM symbol. \tcb{In particular, it is assumed that $\mathbb{E}\left\{\abs{\Htilde_{n,m}}^2  \right\} = \sigma^2_{\tilde{H}}$
			for any $n,m$.}	
		
		For the communication channel, the received symbol for the $n$-th subcarrier and the $m$-th OFDM symbol after OFDM demodulation and CP removal are given by
		\begin{align}
		\Ytilde_{n,m} = \Htilde_{n,m} X_{n,m} + \Ztilde_{n,m},
		\end{align}
		where $\Ztilde_{n,m}\sim\mathbb{C}\mathcal{N}(0,\tcb{\tilde{N}_0})$ \tcb{for any $n,m$.}
		
		
		%
		%
		%

		\subsection{Performance Metrics}
		
		\subsubsection{\tcb{Sensing Metric}}
		For the sensing channel, we use the CRB related to the range and velocity estimation of the target as the CRB becomes tight to the mean-squared estimator of the ML estimator. In particular, we denote $\etab \triangleq [\alpha_R,~ \alpha_I,~f_D,~ \tau]$ as the unknown parameters to be estimated, where $\alpha_R \triangleq \Re\{\alpha\}$ and $\alpha_I \triangleq \Im\{\alpha\}$. Then, the Fisher information matrix (FIM) for the parameter vector $\etab$ is denoted as $\Jb(\etab) \in\mathbb{C}^{4\times 4}$. Then, the CRBs related to the range and velocity estimations are given by 
		\begin{align}\label{eq:CRB_ran_vel}
		\text{CRB}_{\text{ran}} &= c^2[\Jb^{-1}(\etab)]_{4,4},~\text{CRB}_{\text{vel}} =  \frac{\lambda^2[\Jb^{-1}(\etab)]_{3,3}}{4 \cos^2 \left(\beta/2\right)}.
		\end{align}
		
		\subsubsection{\tcb{Communication Metric}}
		For the communication channel we use an upper bound for the Shannon (ergodic) capacity 
		as the performance metric. In particular, 
		\tcb{by averaging over the realizations of the channel coefficients \cite{schafhuber2004wireless} and employing the Jensen's inequality}, we can write
		
		\tcb{
			\begin{align}
			R = & \frac{1}{M\Ts}\sum_{(n,m)\notin \mathcal{P}} \mathbb{E} \left\{  \log_2 \left(1 + \frac{\abs{\Htilde_{n,m}X_{n,m}}^2}{\tilde{N}_0} \right)\right\} \nonumber\\
			&\leq \underbrace{\frac{N(1-\rho)}{T_s}\log_2\left(1 + \frac{\sigma^2_{\tilde{H}}}{\tilde{N}_0}\right)}_{\triangleq R_u}~[\text{bits/sec}] \label{eq:rate_upper_bd}.
			\end{align}
		} One should note that while calculating $R_u$, channel estimation error for the communication channel is not taken into account. 
		
		Our objective is to examine how the pilot design affects the trade-off between sensing and communication networks, using the CRBs given by \eqref{eq:CRB_ran_vel} and the communication rate upper bound \tcb{[bits/sec]} given by \eqref{eq:rate_upper_bd}. In particular, our design parameter is selection of $\mathcal{P}$.

		\section{Theoretical Bounds}

		In this section, the CRBs related to the range and velocity estimations will be computed.  We define $\mub(\etab) \triangleq [\mu_{n,m}(\etab)]_{(n,m)\in\mathcal{P}}$ where $\mu_{n,m}(\etab)\triangleq H_{n,m} X_{n,m}$ for $(n,m)\in\mathcal{P}$.  The
		$(k,\ell)$-th entry of $\Jb(\etab)$ can be expressed as \cite{kay1993statistical}
		\begin{align}
		[\Jb(\etab)]_{k,\ell} = \frac{2}{\sigma^2} \Re \left\{\left(\frac{\partial \mub(\etab)}{\partial \eta_k}\right)^{\hermit} \left(\frac{\partial \mub(\etab)}{\partial \eta_\ell}\right)\right\} \in \mathbb{C}^{4\times 4},
		\end{align}
		where $\eta_k$ denotes the $k$-th element of $\etab$. The equivalent Fisher information matrix (EFIM) \cite{Shen2010fundamentallimits} for the estimation of the range and the velocity is denoted as $\Jb_{e}(\etab)\in\mathbb{C}^{2\times 2}$. \tcb{In particular, the following equation is satisfied $[\Jb_e^{-1}(\etab)]_{k,k}  = [\Jb^{-1}(\etab)]_{k+2,k+2}$for $k\in\{1, 2\}$.}
		
		\begin{prop}\label{prop:EFIM}
			$\Jb_{e}(\etab)$ can be expressed as
			\begin{align}\label{eq:EFIM}
			\Jb_{e}(\etab) = \frac{8\pi^2 \abs{\alpha}^2}{\sigma^2} 
			\begin{bmatrix}
			\Ts^2 Q_{M^2}      & -\Ts \Delta f Q_{NM} \\
			-\Ts \Delta f Q_{NM}      & \Delta f^2 Q_{N^2}  \\
			\end{bmatrix}
			\end{align}
			where
			\begin{align}
			Q_{M^2} &\triangleq \sum_{(n,m)\in\mathcal{P}} m^2- \frac{\left(\sum_{(n,m)\in\mathcal{P}} m\right)^2}{\aP},\\
			Q_{NM} &\triangleq \sum_{(n,m)\in\mathcal{P}} nm- \frac{\left(\sum_{(n,m)\in\mathcal{P}} n\right)\left(\sum_{(n,m)\in\mathcal{P}} m\right)}{\aP},\\
			Q_{N^2} &\triangleq \sum_{(n,m)\in\mathcal{P}} n^2- \frac{\left(\sum_{(n,m)\in\mathcal{P}} n\right)^2}{\aP}.
			\end{align}
		\end{prop}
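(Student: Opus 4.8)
The plan is to form the full $4\times4$ FIM $\Jb(\etab)$ from the quadratic-form expression already given, and then obtain $\Jb_{e}(\etab)$ as the Schur complement of the block belonging to the amplitude nuisance parameters $(\alpha_R,\alpha_I)$. This reduction is legitimate because the stated identity $[\Jb_e^{-1}(\etab)]_{k,k}=[\Jb^{-1}(\etab)]_{k+2,k+2}$ is precisely the block-inverse property that characterizes the EFIM as a Schur complement. First I would write $\mu_{n,m}(\etab)=\alpha\,\phi_{n,m}$ with $\phi_{n,m}\triangleq e^{-j2\pi\tau n\Delta f}e^{j2\pi f_D m\Ts}X_{n,m}$, and record the single fact that drives every simplification: since $\abs{X_{n,m}}=1$ we have $\abs{\phi_{n,m}}=1$, hence $\abs{\mu_{n,m}}^2=\abs{\alpha}^2$ on every pilot. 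The four derivatives are then immediate, namely $\partial\mu_{n,m}/\partial\alpha_R=\phi_{n,m}$, $\partial\mu_{n,m}/\partial\alpha_I=j\phi_{n,m}$, $\partial\mu_{n,m}/\partial f_D=j2\pi m\Ts\,\mu_{n,m}$, and $\partial\mu_{n,m}/\partial\tau=-j2\pi n\Delta f\,\mu_{n,m}$.

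Next I would assemble $\Jb(\etab)$ entrywise and read off its block structure $\Jb=\begin{bmatrix}\Ab&\Bb\\\Bb^{\hermit}&\Cb\end{bmatrix}$, where $\Ab$ collects the $(\alpha_R,\alpha_I)$ rows and columns, $\Cb$ the $(f_D,\tau)$ rows and columns, and $\Bb$ the cross terms. Using $\abs{\phi_{n,m}}=1$, the amplitude block is diagonal, $\Ab=(2\aP/\sigma^2)\mathbf{I}$, since its off-diagonal entry is $\Re\{j\aP\}=0$. The delay/Doppler block $\Cb$ equals $8\pi^2\abs{\alpha}^2/\sigma^2$ times the matrix whose entries are the raw moments $\Ts^2\sum m^2$, $-\Ts\Delta f\sum nm$, and $\Delta f^2\sum n^2$. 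The cross block $\Bb$ is proportional to the first moments $\sum m$ and $\sum n$, each weighted linearly by $\alpha_R$ or $\alpha_I$ through factors such as $\Re\{j\alpha\}=-\alpha_I$.

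Finally, because $\Ab^{-1}=(\sigma^2/(2\aP))\mathbf{I}$ is a scalar multiple of the identity, the Schur complement $\Jb_e=\Cb-\Bb^{\hermit}\Ab^{-1}\Bb$ reduces to evaluating $\Bb^{\hermit}\Bb$. The crucial cancellation is that in each entry of $\Bb^{\hermit}\Bb$ the two columns of $\Bb$ recombine as $\alpha_R^2+\alpha_I^2=\abs{\alpha}^2$, so the correction term becomes $8\pi^2\abs{\alpha}^2/\sigma^2$ times the outer products of the first moments divided by $\aP$; subtracting this from $\Cb$ converts each raw second moment into its centered counterpart, i.e. $\sum m^2-(\sum m)^2/\aP=Q_{M^2}$ and likewise for $Q_{NM}$ and $Q_{N^2}$, which yields \eqref{eq:EFIM}. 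I expect the main obstacle to be the sign and real-part bookkeeping in $\Bb$ (the $\Re\{j\alpha\}=-\alpha_I$, $\Re\{\alpha\}=\alpha_R$ pattern), together with checking that these are exactly the signs that let $\Bb^{\hermit}\Bb$ collapse to an $\abs{\alpha}^2$-proportional matrix and reproduce the $-\Ts\Delta f$ off-diagonal sign; once that recombination is verified, the identification with $Q_{M^2}$, $Q_{NM}$, and $Q_{N^2}$ is purely mechanical.
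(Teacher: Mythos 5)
Your proposal is correct and follows essentially the same route as the paper: both assemble the $4\times4$ FIM in $2\times2$ blocks from the same four derivatives and obtain $\Jb_e(\etab)$ as the Schur complement with respect to the $(\alpha_R,\alpha_I)$ block (the paper labels this step the ``Woodbury identity,'' with its $\Ab,\Bb,\Cb,\Db$ matching your $\Ab,\Bb,\Bb^{\hermit},\Cb$ up to the factor $2/\sigma^2$), and both rely on exactly the cancellations you identify, $\abs{X_{n,m}}=1$ making the amplitude block $\aP\mathbf{I}$ and $\alpha_R^2+\alpha_I^2=\abs{\alpha}^2$ collapsing $\Bb^{\hermit}\Bb$ so that the raw moments become the centered quantities $Q_{M^2},Q_{NM},Q_{N^2}$.
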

		\begin{proof}
			One can refer Appendix~\ref{sec:FisherInformationMat} for detailed computations of $\Jb(\etab)$.  By using the Woodbury identity, the EFIM can be computed as
			\begin{align}
				\Jb_{e}(\etab) = \tcb{\frac{2}{\sigma^2}}(\Db-\Cb\Ab^{-1}\Bb),
			\end{align}
			where $\Ab, \Bb, \Cb$ and $\Db$ are defined in \eqref{eq:A}-\eqref{eq:D}. After some algebraic manipulations, \eqref{eq:EFIM} can be obtained.
		\end{proof}

		As a consequence of Proposition~\ref{prop:EFIM}, CRBs related to the range and velocity are calculated as
		\begin{align}
		\text{CRB}_{\text{ran}} &= \frac{Q_{M^2}}{Q_{N^2}Q_{M^2}-(Q_{NM})^2} \frac{\sigma^2 c^2}{8\pi^2 \Delta f^2 \abs{\alpha}^2} \label{eq:CRB_ran},\\
		\text{CRB}_{\text{vel}} &= \frac{Q_{N^2}}{Q_{N^2}Q_{M^2}-(Q_{NM})^2} \frac{\sigma^2 \lambda^2}{32\pi^2 \abs{\alpha}^2 \Ts  \cos^2\left(\beta/2\right)}\label{eq:CRB_vel}.
		\end{align}
		\vspace{-0.8cm}
		\subsection{Special Case: Periodic Pilot Pattern}
		Next, we present CRB values when pilot symbols are periodically distributed across time and frequency as widely used in communication systems. In particular, we define $\mathcal{P}$ as a periodic pilot pattern if $\mathcal{P}$ can be expressed as
		\begin{align}\label{eq:pilot_periodic_pattern}
		\mathcal{P} = \left\{(n, m)\mid \frac{n}{n_p} \in\mathbb{N}~\text{and}~ \frac{m}{m_p} \in\mathbb{N}\right\}.
		\end{align}
		for some $n_p, m_p\in\mathbb{N}$. The number of pilot symbols is given by $\aP = (K+1)(L+1)$, where $K$ and $L$ are defined as 	$K \triangleq \lfloor(N-1)/n_p\rfloor$ and $L \triangleq \lfloor(M-1)/m_p\rfloor$. For example, when $n_p = 3$ and $m_p =2$, an example of periodic pilot pattern is provided in Fig.~\ref{fig:pilot_ex}.
		
		\begin{rem}\label{rem:max_detect}
			For the periodic pilot design, the maximum detectable range and velocity will be simply $c/(n_p \Delta f)$ and $c/(2f_c m_p\Ts  \cos(\beta/2))$, respectively. In other words, as $n_p$ or $m_p$ increases, the system compromises its maximum unambigious range or velocity.
		\end{rem}
		\begin{figure}
			\centering
			\includegraphics[width=0.9\linewidth]{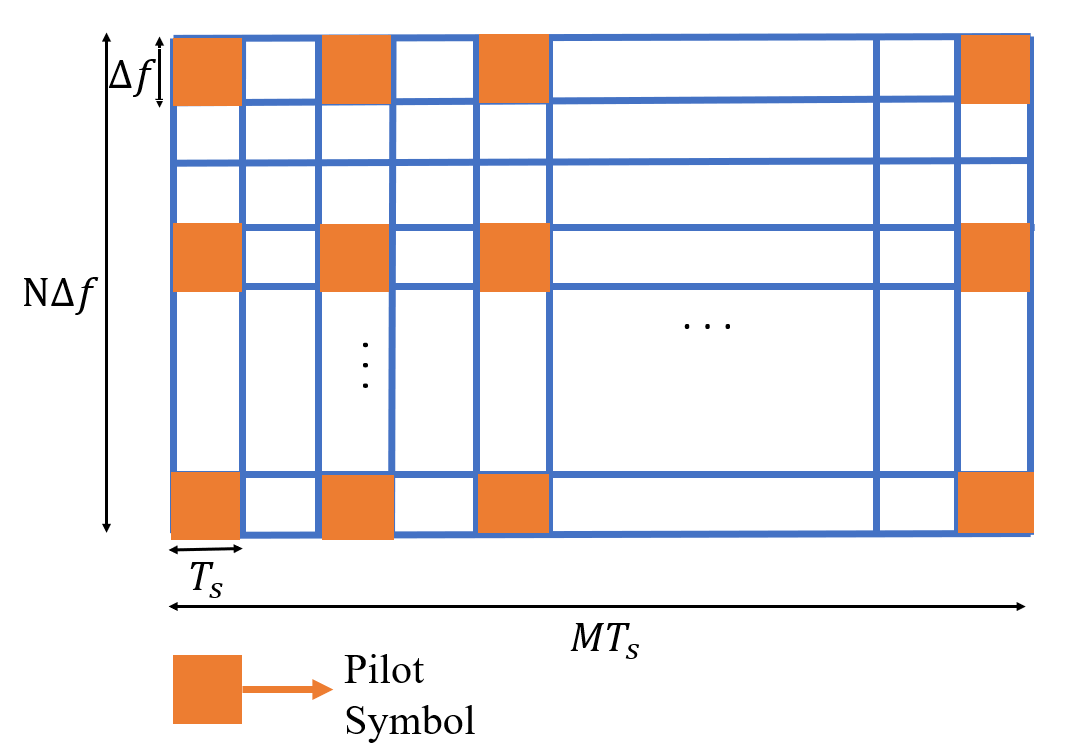}
			\caption{An example of pilot pattern over time and frequency for $n_p = 3$ and $m_p = 2$.}
			\label{fig:pilot_ex}
			\vspace{-0.5cm}
		\end{figure}

		\begin{prop}\label{prop:CRB_periodic}
			For a periodic pilot pattern as defined in \eqref{eq:pilot_periodic_pattern}, $\text{CRB}_{\text{ran}}$ and $ \text{CRB}_{\text{vel}}$ are given by
			\begin{align}
			\text{CRB}_{\text{ran}} =  \frac{12}{K(K+2)\aP n_p^2} \frac{\sigma^2 c^2}{8\pi^2 \abs{\alpha}^2\Delta f^2}
			\end{align}
			and
			\begin{align}\label{eq:CRB_vel_periodic}
			\text{CRB}_{\text{vel}} =  \frac{12}{L(L+2)\aP m_p^2} \frac{\sigma^2 \lambda^2}{32\pi^2 \abs{\alpha}^2 \Ts^2  \cos^2\left(\beta/2\right)}.
			\end{align}
		\end{prop}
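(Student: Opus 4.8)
The plan is to specialize the general CRB expressions \eqref{eq:CRB_ran} and \eqref{eq:CRB_vel} to the product structure of the periodic pattern, so the entire proof reduces to evaluating $Q_{N^2}$, $Q_{M^2}$, and $Q_{NM}$ for this particular $\mathcal{P}$. First I would rewrite the pilot set explicitly as the grid $\mathcal{P} = \{(i\,n_p,\, j\,m_p) : 0\le i\le K,\ 0\le j\le L\}$, which follows directly from \eqref{eq:pilot_periodic_pattern} together with $K = \lfloor(N-1)/n_p\rfloor$ and $L = \lfloor(M-1)/m_p\rfloor$ (taking $\mathbb{N}$ to include $0$), and note the count $\aP = (K+1)(L+1)$. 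The key structural observation is that, because $\mathcal{P}$ is a Cartesian product, every sum over $(n,m)\in\mathcal{P}$ factorizes into independent one-dimensional sums over $i$ and $j$; equivalently, the quantities $Q_{N^2}, Q_{M^2}, Q_{NM}$ are $\aP$ times the empirical variances and covariance of the coordinates $n$ and $m$ across the grid.

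Next I would evaluate the diagonal moments. Substituting $n = i\,n_p$ and using the power-sum identities $\sum_{i=0}^{K} i = K(K+1)/2$ and $\sum_{i=0}^{K} i^2 = K(K+1)(2K+1)/6$, the centered sum collapses via the algebraic simplification $\frac{K(2K+1)}{6} - \frac{K^2}{4} = \frac{K(K+2)}{12}$ to the compact form $Q_{N^2} = \aP\, n_p^2\, K(K+2)/12$, and symmetrically $Q_{M^2} = \aP\, m_p^2\, L(L+2)/12$. The decisive step is the cross term: since $n$ and $m$ range independently over the grid, the mixed sum factors exactly as $\sum_{(n,m)\in\mathcal{P}} nm = \left(\sum_{(n,m)\in\mathcal{P}} n\right)\left(\sum_{(n,m)\in\mathcal{P}} m\right)/\aP$, whence $Q_{NM} = 0$. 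This vanishing of the sample covariance — which decouples range from velocity estimation — is the real content of the proposition.

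Finally I would substitute back into \eqref{eq:CRB_ran} and \eqref{eq:CRB_vel}. With $Q_{NM} = 0$ the determinant collapses, $Q_{N^2}Q_{M^2} - (Q_{NM})^2 = Q_{N^2}Q_{M^2}$, so $\text{CRB}_{\text{ran}}$ reduces to $(Q_{N^2})^{-1}\,\frac{\sigma^2 c^2}{8\pi^2\abs{\alpha}^2\Delta f^2}$ and $\text{CRB}_{\text{vel}}$ to $(Q_{M^2})^{-1}\,\frac{\sigma^2\lambda^2}{32\pi^2\abs{\alpha}^2\Ts^2\cos^2(\beta/2)}$; inserting the closed forms for $Q_{N^2}$ and $Q_{M^2}$ immediately yields the stated expressions.

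I expect no genuine obstacle here, as the argument is a direct computation once the product structure is exploited; the only things to watch are bookkeeping details — keeping the index convention consistent so the counts are $K+1$ and $L+1$, carrying the $n_p, m_p$ factors correctly through the second moments, and executing the single nontrivial simplification that produces the $K(K+2)$ and $L(L+2)$ factors. A useful sanity check is that both $Q_{N^2}$ and $Q_{M^2}$ are proportional to $\aP$, so the resulting CRBs scale inversely with the number of pilot symbols, as one would expect.
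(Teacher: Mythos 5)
Your proposal is correct and takes essentially the same route as the paper: its Appendix~\ref{sec:FIM_pilot_periodic} evaluates the same power sums over the product grid, obtains $Q_{NM}=0$, $Q_{N^2}=K(K+2)\aP n_p^2/12$ and $Q_{M^2}=L(L+2)\aP m_p^2/12$, and then substitutes into \eqref{eq:CRB_ran} and \eqref{eq:CRB_vel} exactly as you do. One incidental observation: in your final reduction you wrote $\Ts^2$ in the velocity CRB, which is the form consistent with the EFIM \eqref{eq:EFIM} and with the proposition's \eqref{eq:CRB_vel_periodic}, whereas the paper's general expression \eqref{eq:CRB_vel} displays $\Ts$ to the first power---an apparent typo in the paper that your computation silently corrects.
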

		\begin{proof}
			Combining \eqref{eq:CRB_ran} and \eqref{eq:CRB_vel} with the results (\eqref{eq:Q_NM_periodic}, \eqref{eq:Q_N2_periodic},
			\eqref{eq:Q_M2_periodic})  from Appendix~\ref{sec:FIM_pilot_periodic}, we reach the desired expressions.
		\end{proof}
		\begin{rem}\label{rem:CRB}
			By ignoring the constant terms, $\text{CRB}_{\text{ran}}$ decays according to 
			\begin{align}
			\frac{1}{K (K+2)\aP n_p^2} &\approx\frac{1}{\aP (N-1) \left(N-1 + \frac{2(N-1)}{K}\right)} \\
			&\propto \frac{1}{\rho \left(1 + \frac{2}{K}\right)}
			\end{align}
			Similarly,  $\text{CRB}_{\text{vel}}$ decays according to  $1/(\rho \left(1 + \frac{2}{L}\right))$.
			\tcb{Consequently, for a given $\rho$, equivalently, for a given communication rate \eqref{eq:rate_upper_bd}, to have a better range estimate $K$ needs to be minimized whereas $L$ needs to be minimized to have a better velocity estimate.}
			

			
			
		\end{rem}

		\section{Simulation Results}
		In this section, we present numerical experiments to illustrate the impact of the pilot design, pilot overhead ratio $(\rho)$ and SNR [dB] on the sensing performance, where we define $\text{SNR} \triangleq \abs{\alpha}^2/\sigma^2$. 
		The transmitter and the receiver are located at $[-40, 0]~$[m] and $[0, 40]~$[m], respectively. Both the transmitter and the receiver are equipped with a single antenna. The target location is denoted at $\pb_{\mathrm{t}} = [x_{\mathrm{t}},~ y_{\mathrm{t}}]$ where $x_{\mathrm{t}}\sim\mathcal{U}[80, 100]~$[m] and $y_{\mathrm{t}}\sim\mathcal{U}[-100, -80]~$[m]. Target's velocity, $v$ is modeled as $v\sim\mathcal{U}[-30, 30]~$[m/s]. In addition, $\delta$, the angle between target's velocity vector and the bisector of the bi-static angle is modeled as $\delta\sim\mathcal{U}[-5^\circ,~5^\circ]$. \tcb{The carrier frequency $f_c$ is set to $30~$GHz. OFDM related parameters are taken as $\Delta f = 200~$kHz, $N = 70$, $M = 50$.} $\Tcp$ is taken as $1~[\mu$s], i.e., the maximum detectable range \tcb{to avoid ISI} is equal to $3\times 10^{8} \times 10^{-6} = 300~$[m]. To support this distance, $n_p\le 5$ should be satisfied. Similarly, from Remark~\ref{rem:max_detect}, to be able to detect target's velocity $\in[-30, 30]~$[m/s], $m_p\leq 27$ should be satisfied.

		\tcb{Range and velocity estimations are performed by using 2D-FFT-based periodogram method as described in \cite{braun2014ofdm}. After finding the peak of the periodogram, quadratic interpolation is employed \cite{braun2014ofdm}. FFT sizes over both subcarriers and symbols are taken as $4096$.}

		In Figs.~\ref{fig: RMSE_range_vs_SNR}-\ref{fig: RMSE_vel_vs_SNR}, the RMSEs for $\dbis$ and $\vbis$ are plotted against SNR for four different values of $\rho\in\{0.02, 0.1, 0.5, 1\}$ are plotted, along with their corresponding theoretical bounds\footnote{\tcb{In Figs.~\ref{fig: RMSE_range_vs_SNR}, \ref{fig: RMSE_vel_vs_SNR}, square roots of $\text{CRB}_{\text{ran}}$ and 	$\text{CRB}_{\text{vel}}$ are plotted.}}.}		\tcb{Similarly, when $\sigma^2_{\tilde{H}}/\tilde{N}_0$ is set to 5 dB, for the considered $\rho$ values,  upper bounds on the communication rate [Mbps] are computed as $\{23.523, 21.602, 12.001, 0\}~$Mbps, respectively.} $\rho = 0.02, 0.1, 0.5, 1$ is obtained when $(n_p, m_p) = (10, 5), (2, 5), (2, 1)$ and $(1, 1)$, respectively. Since the bi-static angle $\beta$ appears in \eqref{eq:CRB_vel}, $\text{CRB}_{\text{vel}}$ depends on the realization of the target's position. Therefore, we plot the expected CRB over different target position realizations, denoted as $\text{ECRB}_{\text{vel}}$.

	When $\rho = 0.02$, meaning $n_p = 10$ and $m_p = 5$, reliable range estimation is not achievable, as observed from Fig.~\ref{fig: RMSE_range_vs_SNR}. In addition, as we observe from  Fig.~\ref{fig: RMSE_vel_vs_SNR}, LS algorithm does not follow the CRB trend even though $m_p\le 27$. This is because velocity estimation relies on the bi-static angle estimate $\widehat{\beta}$, which is derived from the range estimate. 
	Moreover, for $\rho\in\{0.1, 0.5, 1\}$, the performance of the LS algorithm can be predicted in the high-SNR regime via the CRB analysis. Additionally, as $\rho$ increases, the minimum SNR value at which the LS algorithm aligns with the CRB trend decreases. Conversely, increasing the pilot-overhead ratio does not improve the performance of the LS approach in the low-SNR regime. Furthermore, by comparing the curves for $\rho = 0.5$ and $\rho = 1$, we can conclude that increasing the pilot-overhead ratio from  $0.5$ to $1$  results in a negligible performance improvement in both range and velocity estimations\tcb{,~whereas communication rate drops to $0$ [bits/sec]}. 
	
	\tcb{Furthermore, for $\rho = 0.1$, four different possible values of $(n_p, m_p)$ are considered and corresponding $\text{CRB}_{\text{ran}}$ and  $\text{ECRB}_{\text{vel}}$ values are presented in Table~\ref{tab:CRB_ran_vel_various_np_mp} when SNR = 5 dB. In accordance with Remark~\ref{rem:CRB},  $\text{CRB}_{\text{ran}}$ and $\text{ECRB}_{\text{vel}}$ are minimized for the largest possible values of $n_p$ and $m_p$, respectively.}

	\begin{figure}
		\centering
		\begin{tikzpicture}
		\begin{semilogyaxis}[
		width= 2.4 in,
		height=1.8 in,scale only axis,
		legend style={nodes={scale= 0.5, transform shape},at={(0,0)},anchor=south west}, 
		legend cell align={left},
		legend image post style={mark indices={}},
		xticklabel style = {font=\color{white!15!black},font=\footnotesize},
		xlabel style={font=\color{white!15!black},font=\footnotesize},
		yticklabel style = {font=\color{white!15!black},font=\footnotesize},
		ylabel style={font=\color{white!15!black},font=\footnotesize},
		ylabel={Range RMSE [m]},
		xlabel={SNR [dB]},
		xmin=-31, xmax=31,
		ymin=0.001, ymax=1000,
		xtick={-30, -20, -10 , 0, 10, 20, 30},
		ytick={0.001, 0.01,0.1,1,10, 100, 1000},
		ymajorgrids=true,
		xmajorgrids=true,
		grid style=dashed,
		every axis plot/.append style={thick},
		]

		\addplot[
		color= blue,
		line width= 1pt,
		mark = ball,
		mark options={solid},
		mark size = 1pt,
		mark indices={1, 4, 7, 10, 13, 16, 19, 22, 25, 28},
		style = solid,
		]
		coordinates {
			(-30,189.2982)(-28,187.6993)(-26,187.6994)(-24,187.5965)(-22,188.28)(-20,190.3356)(-18,187.5237)(-16,184.9361)(-14,181.5546)(-12,172.7352)(-10,163.5401)(-8,151.6341)(-6,149.5503)(-4,149.7)(-2,149.7529)(0,149.7985)(2,149.8374)(4,149.8692)(6,149.8951)(8,149.9162)(10,149.9331)(12,149.9467)(14,149.9575)(16,149.9662)(18,149.9732)(20,149.9786)(22,149.983)(24,149.9865)(26,149.9893)(28,149.9915)(30,149.9932)
		};

		\addplot[
		color = blue,
		style = solid,
		line width=1pt,
		]
		coordinates {
			(-30,31.902)(-28,25.3406)(-26,20.1288)(-24,15.9889)(-22,12.7004)(-20,10.0883)(-18,8.0134)(-16,6.3653)(-14,5.0561)(-12,4.0162)(-10,3.1902)(-8,2.5341)(-6,2.0129)(-4,1.5989)(-2,1.27)(0,1.0088)(2,0.80134)(4,0.63653)(6,0.50561)(8,0.40162)(10,0.31902)(12,0.25341)(14,0.20129)(16,0.15989)(18,0.127)(20,0.10088)(22,0.080134)(24,0.063653)(26,0.050561)(28,0.040162)(30,0.031902)
		};

		\addplot[
		color= red,
		line width= 1pt,
		mark = pentagon*,
		mark options={solid},
		mark indices={1, 4, 7, 10, 13, 16, 19, 22, 25, 28},
		mark size = 1pt,
		style = solid,
		]
		coordinates {
			(-30,259.7287)(-28,255.8776)(-26,254.8601)(-24,251.6769)(-22,244.709)(-20,230.175)(-18,205.0322)(-16,125.0038)(-14,60.8192)(-12,29.6799)(-10,1.3027)(-8,1.0264)(-6,0.81085)(-4,0.64224)(-2,0.50911)(0,0.40385)(2,0.32044)(4,0.25445)(6,0.202)(8,0.16041)(10,0.12742)(12,0.10116)(14,0.080364)(16,0.063837)(18,0.050693)(20,0.04027)(22,0.031989)(24,0.02541)(26,0.020187)(28,0.016035)(30,0.012738)
		};

		\addplot[
		color = red,
		style = dashed,
		line width= 1pt,
		]
		coordinates {
			(-30,14.1264)(-28,11.221)(-26,8.9132)(-24,7.08)(-22,5.6238)(-20,4.4672)(-18,3.5484)(-16,2.8186)(-14,2.2389)(-12,1.7784)(-10,1.4126)(-8,1.1221)(-6,0.89132)(-4,0.708)(-2,0.56238)(0,0.44672)(2,0.35484)(4,0.28186)(6,0.22389)(8,0.17784)(10,0.14126)(12,0.11221)(14,0.089132)(16,0.0708)(18,0.056238)(20,0.044672)(22,0.035484)(24,0.028186)(26,0.022389)(28,0.017784)(30,0.014126)
		};

		\addplot[
		thick,
		color= green,
		line width= 1pt,
		mark = square*,
		mark options={solid},
		mark indices={1, 4, 7, 10, 13, 16, 19, 22, 25, 28},
		mark size = 1pt,
		style = solid,
		]
		coordinates {
			(-30,271.9238)(-28,265.615)(-26,249.8841)(-24,217.0756)(-22,165.6994)(-20,67.1988)(-18,1.534)(-16,1.1998)(-14,0.94679)(-12,0.74868)(-10,0.59349)(-8,0.47035)(-6,0.37261)(-4,0.296)(-2,0.23507)(0,0.18661)(2,0.1482)(4,0.11769)(6,0.093547)(8,0.074293)(10,0.059056)(12,0.046903)(14,0.037275)(16,0.029632)(18,0.023537)(20,0.018696)(22,0.01485)(24,0.011796)(26,0.0093695)(28,0.0074399)(30,0.0059097)
		};

		\addplot[
		thick,
		color = green,
		style = dotted,
		line width= 1pt,
		]
		coordinates {
			(-30,6.3175)(-28,5.0182)(-26,3.9861)(-24,3.1663)(-22,2.5151)(-20,1.9978)(-18,1.5869)(-16,1.2605)(-14,1.0013)(-12,0.79533)(-10,0.63175)(-8,0.50182)(-6,0.39861)(-4,0.31663)(-2,0.25151)(0,0.19978)(2,0.15869)(4,0.12605)(6,0.10013)(8,0.079533)(10,0.063175)(12,0.050182)(14,0.039861)(16,0.031663)(18,0.025151)(20,0.019978)(22,0.015869)(24,0.012605)(26,0.010013)(28,0.0079533)(30,0.0063175)   
		};

		\addplot[
		thick,
		color= orange,
		line width= 1pt,
		mark = triangle*,
		mark options={solid},
		mark indices={1, 4, 7, 10, 13, 16, 19, 22, 25, 28},
		mark size = 1pt,
		style = solid,
		]
		coordinates {
			(-30,635.8311)(-28,607.0768)(-26,517.878)(-24,270.7366)(-22,118.5303)(-20,1.5109)(-18,1.1713)(-16,0.91715)(-14,0.72308)(-12,0.57187)(-10,0.4531)(-8,0.35941)(-6,0.28522)(-4,0.22697)(-2,0.18015)(0,0.14314)(2,0.1138)(4,0.090459)(6,0.071825)(8,0.057045)(10,0.045313)(12,0.036031)(14,0.028631)(16,0.022745)(18,0.018069)(20,0.014356)(22,0.011404)(24,0.0090594)(26,0.0071964)(28,0.0057174)(30,0.004542)
		};

		\addplot[
		thick,
		color = orange,
		style = dashdotdotted,
		line width= 1pt,
		]
		coordinates {
			(-30,4.4658)(-28,3.5473)(-26,2.8177)(-24,2.2382)(-22,1.7779)(-20,1.4122)(-18,1.1218)(-16,0.89104)(-14,0.70778)(-12,0.56221)(-10,0.44658)(-8,0.35473)(-6,0.28177)(-4,0.22382)(-2,0.17779)(0,0.14122)(2,0.11218)(4,0.089104)(6,0.070778)(8,0.056221)(10,0.044658)(12,0.035473)(14,0.028177)(16,0.022382)(18,0.017779)(20,0.014122)(22,0.011218)(24,0.0089104)(26,0.0070778)(28,0.0056221)(30,0.0044658)  
		};
		
		\legend{LS $(\rho = 0.02)$, $\text{CRB}_{\text{ran}}$ $(\rho = 0.02)$, LS $(\rho = 0.1)$, $\text{CRB}_{\text{ran}}$ $(\rho = 0.1)$, LS $(\rho = 0.5)$, $\text{CRB}_{\text{ran}}$ $(\rho = 0.5)$, 
			LS $(\rho = 1)$, $\text{CRB}_{\text{ran}}$ $(\rho = 1)$}	
		
		\end{semilogyaxis}
		\end{tikzpicture}
		\caption{RMSE of estimation of $\dbis$ along with $\text{CRB}_{\text{ran}}$ versus SNR [dB] when $\rho\in\{0.02, 0.1, 0.5,  1\}$.}
		\label{fig: RMSE_range_vs_SNR}
	\end{figure}

	\begin{figure}
		\centering
		\begin{tikzpicture}
		\begin{semilogyaxis}[
		width=2.4in,
		height=1.8in,scale only axis,
		legend style={nodes={scale= 0.5, transform shape},at={(0,0)},anchor=south west}, 
		legend cell align={left},
		legend image post style={mark indices={}},
		xticklabel style = {font=\color{white!15!black},font=\footnotesize},
		xlabel style={font=\color{white!15!black},font=\footnotesize},
		yticklabel style = {font=\color{white!15!black},font=\footnotesize},
		ylabel style={font=\color{white!15!black},font=\footnotesize},
		ylabel={Velocity RMSE [m/s]},
		xlabel={SNR [dB]},
		xmin=-31, xmax=31,
		ymin=0.001, ymax=1000,
		xtick={-30, -20, -10 , 0, 10, 20, 30},
		ytick={0.001, 0.01,0.1,1,10, 100, 1000},
		ymajorgrids=true,
		xmajorgrids=true,
		grid style=dashed,
		every axis plot/.append style={thick},
		]

		\addplot[
		color= blue,
		line width= 1pt,
		mark = ball,
		mark options={solid},
		mark size = 1pt,
		mark indices={1, 4, 7, 10, 13, 16, 19, 22, 25, 28},
		style = solid,
		]
		coordinates {
			(-30,63.3169)(-28,62.5803)(-26,62.5803)(-24,62.1188)(-22,62.3401)(-20,62.0811)(-18,61.6645)(-16,60.481)(-14,57.6067)(-12,47.4587)(-10,34.358)(-8,12.6135)(-6,7.75)(-4,1.9053)(-2,1.6943)(0,1.555)(2,1.4631)(4,1.4025)(6,1.3632)(8,1.3374)(10,1.3208)(12,1.3099)(14,1.3028)(16,1.2981)(18,1.2951)(20,1.293)(22,1.2917)(24,1.2907)(26,1.2901)(28,1.2896)(30,1.2893)
		};

		\addplot[
		color = blue,
		style = solid,
		line width=1pt,
		]
		coordinates {
			(-30,25.3555)(-29,22.5982)(-28,20.1406)(-27,17.9503)(-26,15.9983)(-25,14.2585)(-24,12.7079)(-23,11.3259)(-22,10.0942)(-21,8.9965)(-20,8.0181)(-19,7.1462)(-18,6.369)(-17,5.6764)(-16,5.0591)(-15,4.5089)(-14,4.0186)(-13,3.5816)(-12,3.1921)(-11,2.8449)(-10,2.5356)(-9,2.2598)(-8,2.0141)(-7,1.795)(-6,1.5998)(-5,1.4258)(-4,1.2708)(-3,1.1326)(-2,1.0094)(-1,0.89965)(0,0.80181)(1,0.71462)(2,0.6369)(3,0.56764)(4,0.50591)(5,0.45089)(6,0.40186)(7,0.35816)(8,0.31921)(9,0.28449)(10,0.25356)(11,0.22598)(12,0.20141)(13,0.1795)(14,0.15998)(15,0.14258)(16,0.12708)(17,0.11326)(18,0.10094)(19,0.089965)(20,0.080181)(21,0.071462)(22,0.06369)(23,0.056764)(24,0.050591)(25,0.045089)(26,0.040186)(27,0.035816)(28,0.031921)(29,0.028449)(30,0.025356)
			
		};

		\addplot[
		color= red,
		line width= 1pt,
		mark = pentagon*,
		mark options={solid},
		mark indices={1, 4, 7, 10, 13, 16, 19, 22, 25, 28},
		mark size = 1pt,
		style = solid,
		]
		coordinates {
			(-30,52.7727)(-28,51.9199)(-26,51.8778)(-24,50.9053)(-22,50.1909)(-20,47.5645)(-18,44.4775)(-16,26.4213)(-14,9.5207)(-12,5.9335)(-10,1.121)(-8,0.88294)(-6,0.69737)(-4,0.5523)(-2,0.43752)(0,0.34754)(2,0.27541)(4,0.21869)(6,0.17346)(8,0.13778)(10,0.10942)(12,0.086932)(14,0.06901)(16,0.054818)(18,0.043559)(20,0.034572)(22,0.027456)(24,0.021809)(26,0.017325)(28,0.013762)(30,0.010928)
		};

		\addplot[
		color = red,
		style = dashed,
		line width= 1pt,
		]
		coordinates {
			(-30,11.3393)(-29,10.1062)(-28,9.0072)(-27,8.0276)(-26,7.1546)(-25,6.3766)(-24,5.6831)(-23,5.0651)(-22,4.5143)(-21,4.0234)(-20,3.5858)(-19,3.1959)(-18,2.8483)(-17,2.5386)(-16,2.2625)(-15,2.0165)(-14,1.7972)(-13,1.6017)(-12,1.4275)(-11,1.2723)(-10,1.1339)(-9,1.0106)(-8,0.90072)(-7,0.80276)(-6,0.71546)(-5,0.63766)(-4,0.56831)(-3,0.50651)(-2,0.45143)(-1,0.40234)(0,0.35858)(1,0.31959)(2,0.28483)(3,0.25386)(4,0.22625)(5,0.20165)(6,0.17972)(7,0.16017)(8,0.14275)(9,0.12723)(10,0.11339)(11,0.10106)(12,0.090072)(13,0.080276)(14,0.071546)(15,0.063766)(16,0.056831)(17,0.050651)(18,0.045143)(19,0.040234)(20,0.035858)(21,0.031959)(22,0.028483)(23,0.025386)(24,0.022625)(25,0.020165)(26,0.017972)(27,0.016017)(28,0.014275)(29,0.012723)(30,0.011339)
		};

		\addplot[
		thick,
		color= green,
		line width= 1pt,
		mark = square*,
		mark options={solid},
		mark indices={1, 4, 7, 10, 13, 16, 19, 22, 25, 28},
		mark size = 1pt,
		style = solid,
		]
		coordinates {
			(-30,258.1163)(-28,257.1438)(-26,250.4034)(-24,224.4024)(-22,160.9944)(-20,85.2308)(-18,1.1544)(-16,0.88728)(-14,0.69148)(-12,0.54141)(-10,0.42644)(-8,0.33686)(-6,0.2667)(-4,0.21136)(-2,0.16752)(0,0.13296)(2,0.10556)(4,0.083721)(6,0.066427)(8,0.052734)(10,0.041852)(12,0.033211)(14,0.026386)(16,0.020956)(18,0.016641)(20,0.013223)(22,0.010499)(24,0.0083436)(26,0.0066271)(28,0.0052628)(30,0.0041802)
		};

		\addplot[
		thick,
		color = green,
		style = dotted,
		line width= 1pt,
		]
		coordinates {
			(-30,5.0467)(-29,4.4979)(-28,4.0087)(-27,3.5728)(-26,3.1843)(-25,2.838)(-24,2.5293)(-23,2.2543)(-22,2.0091)(-21,1.7906)(-20,1.5959)(-19,1.4224)(-18,1.2677)(-17,1.1298)(-16,1.0069)(-15,0.89744)(-14,0.79985)(-13,0.71287)(-12,0.63534)(-11,0.56625)(-10,0.50467)(-9,0.44979)(-8,0.40087)(-7,0.35728)(-6,0.31843)(-5,0.2838)(-4,0.25293)(-3,0.22543)(-2,0.20091)(-1,0.17906)(0,0.15959)(1,0.14224)(2,0.12677)(3,0.11298)(4,0.10069)(5,0.089744)(6,0.079985)(7,0.071287)(8,0.063534)(9,0.056625)(10,0.050467)(11,0.044979)(12,0.040087)(13,0.035728)(14,0.031843)(15,0.02838)(16,0.025293)(17,0.022543)(18,0.020091)(19,0.017906)(20,0.015959)(21,0.014224)(22,0.012677)(23,0.011298)(24,0.010069)(25,0.0089744)(26,0.0079985)(27,0.0071287)(28,0.0063534)(29,0.0056625)(30,0.0050467)
			
		};
		
		\addplot[
		thick,
		color= orange,
		line width= 1pt,
		mark = triangle*,
		mark options={solid},
		mark indices={1, 4, 7, 10, 13, 16, 19, 22, 25, 28},
		mark size = 1pt,
		style = solid,
		]
		coordinates {
			(-30,242.0045)(-28,228.0165)(-26,191.7514)(-24,99.5973)(-22,44.0185)(-20,1.115)(-18,0.86971)(-16,0.68348)(-14,0.53974)(-12,0.42625)(-10,0.33921)(-8,0.26867)(-6,0.21325)(-4,0.16931)(-2,0.13466)(0,0.10699)(2,0.084972)(4,0.06752)(6,0.053647)(8,0.0426)(10,0.033856)(12,0.026911)(14,0.021377)(16,0.016982)(18,0.013489)(20,0.010711)(22,0.0085084)(24,0.0067585)(26,0.0053677)(28,0.0042596)(30,0.0033836)
		};

		\addplot[
		thick,
		color = orange,
		style = dashdotdotted,
		line width= 1pt,
		]
		coordinates {
			(-30,3.5686)(-29,3.1805)(-28,2.8346)(-27,2.5263)(-26,2.2516)(-25,2.0067)(-24,1.7885)(-23,1.594)(-22,1.4207)(-21,1.2662)(-20,1.1285)(-19,1.0058)(-18,0.89638)(-17,0.7989)(-16,0.71202)(-15,0.63459)(-14,0.56558)(-13,0.50407)(-12,0.44925)(-11,0.4004)(-10,0.35686)(-9,0.31805)(-8,0.28346)(-7,0.25263)(-6,0.22516)(-5,0.20067)(-4,0.17885)(-3,0.1594)(-2,0.14207)(-1,0.12662)(0,0.11285)(1,0.10058)(2,0.089638)(3,0.07989)(4,0.071202)(5,0.063459)(6,0.056558)(7,0.050407)(8,0.044925)(9,0.04004)(10,0.035686)(11,0.031805)(12,0.028346)(13,0.025263)(14,0.022516)(15,0.020067)(16,0.017885)(17,0.01594)(18,0.014207)(19,0.012662)(20,0.011285)(21,0.010058)(22,0.0089638)(23,0.007989)(24,0.0071202)(25,0.0063459)(26,0.0056558)(27,0.0050407)(28,0.0044925)(29,0.004004)(30,0.0035686)
		};
		
		\legend{LS $(\rho = 0.02)$, $\text{ECRB}_{\text{vel}}$ $(\rho = 0.02)$, LS $(\rho = 0.1)$, $\text{ECRB}_{\text{vel}}$ $(\rho = 0.1)$, LS $(\rho = 0.5)$, $\text{ECRB}_{\text{vel}}$ $(\rho = 0.5)$, 
			LS $(\rho = 1)$, $\text{ECRB}_{\text{vel}}$ $(\rho = 1)$}	
		\end{semilogyaxis}
		\end{tikzpicture}
		\caption{RMSE of estimation of $\vbis$ along with $\text{ECRB}_{\text{vel}}$ versus SNR [dB] when $\rho\in\{0.02, 0.1, 0.5,  1\}$.}    
		\label{fig: RMSE_vel_vs_SNR}
	\end{figure}

	\begin{table}
		\begin{center}
			\begin{tabular}{| c ||c  || c |} 
				\hline
				$(n_p, m_p)$  & $\text{CRB}_{\text{ran}}$ &  $\text{ECRB}_{\text{vel}}$ \\
				\hline
				$(1, 11)$  &   0.2511 & 0.1862\\
				\hline
				$(2, 5)$  &    0.2512 & 0.2016\\
				\hline
				$(5, 2)$  &   0.2517  & 0.2008 \\
				\hline
				$(11, 1)$  &     0.2306 &  0.2007 \\
				\hline
			\end{tabular}
		\end{center}
		\caption{\scriptsize{\tcb{$\text{CRB}_{\text{ran}}$ and  $\text{ECRB}_{\text{vel}}$ versus $(n_p, m_p)$ when $\rho = 0.1$ and SNR = 5 dB .}}}
		\label{tab:CRB_ran_vel_various_np_mp}
		\vspace{-0.25cm}
	\end{table}

	\section{Concluding Remarks}
	In this work, closed form expressions for the theoretical performance bounds of OFDM based bi-static ISAC system are derived as a function of the pilot pattern. It is numerically verified that when the pilot-overhead ratio satisfies conditions for reliable range and velocity estimation, LS algorithm performs very close to the theoretical performance bounds in the high-SNR regime. In other words, by using the closed form expressions \eqref{eq:CRB_ran} and \eqref{eq:CRB_vel}, performance of the LS-algoritm can be predicted, thus the optimal pilot patterns can be designed according to \eqref{eq:CRB_ran} and \eqref{eq:CRB_vel} to minimize range or velocity estimation errors. \tcb{Instead of periodic pilot patterns, non-periodic pilot patterns can be also considered to improve the sensing performance.} One possible extension of this work is incorporating a LoS path between the transmitter and receiver. Additionally, extending the analysis to a multi-target scenario and deriving the corresponding CRB expressions are other promising future directions. 
	
			\appendices
			\section{Fisher Information Matrix Calculations}\label{sec:FisherInformationMat}
			The following derivatives can be easily computed:
			\begin{align*}
			\frac{\partial \mu_{n,m}(\etab)}{\partial \alpha_R} &= e^{j 2\pi \left(f_D m \Ts-\tau n \Delta f\right)} X_{n,m},\\
			\frac{\partial \mu_{n,m}(\etab)}{\partial \alpha_I} &= je^{j 2\pi \left(f_D m \Ts-\tau n \Delta f\right)} X_{n,m},\\
			\frac{\partial \mu_{n,m}(\etab)}{\partial f_D} &= j \alpha 2\pi m \Ts e^{j 2\pi \left(f_D m \Ts-\tau n \Delta f\right)}X_{n,m}, \\
			\frac{\partial \mu_{n,m}(\etab)}{\partial \tau} &= -j \alpha 2\pi n \Delta f e^{j 2\pi \left(f_D m \Ts-\tau n \Delta f\right)}X_{n,m}.
			\end{align*}
			For example,  $[\Jb(\etab)]_{1,1}$ can be computed as follows:
			\begin{align}
			&[\Jb(\etab)]_{1,1} \nonumber\\
			&= \sum_{(n,m)\in\mathcal{P}} \abs{e^{-j 2\pi \left(f_D m \Ts-\tau n \Delta f\right)}}^2\abs{X_{n,m}}^2 = \aP.
			\end{align}
			Similarly, other entries of $\Jb(\etab)$ can be computed as follows:
			\begin{align}
			\Jb(\etab) = \frac{2}{\sigma^2} 
			\begin{bmatrix}
			\Ab   & \Bb \\
			\Cb   & \Db  \\
			\end{bmatrix}
			\end{align}
			where
			\begin{align}\label{eq:A}
			\Ab \triangleq  \aP
			\begin{bmatrix}
			1   & 0 \\
			0  & 1 
			\end{bmatrix},
			\end{align}
			\begin{align}\label{eq:B}
			\Bb \triangleq  2\pi
			\begin{bmatrix}
			-\alpha_I  \Ts \sum_{(n,m)\in \mathcal{P}} m   & \alpha_I  \Delta f \sum_{(n,m)\in \mathcal{P}} n \\
			\alpha_R  \Ts \sum_{(n,m)\in \mathcal{P}} m   & -\alpha_R  \Delta f \sum_{(n,m)\in \mathcal{P}} n 
			\end{bmatrix},
			\end{align}
			\begin{align}\label{eq:C}
			\Cb \triangleq  2\pi
			\begin{bmatrix}
			-\alpha_I  \Ts \sum_{(n,m)\in \mathcal{P}} m   &  \alpha_R  \Ts \sum_{(n,m)\in \mathcal{P}} m\\
			\alpha_I  \Delta f \sum_{(n,m)\in \mathcal{P}} n   & -\alpha_R  \Delta f \sum_{(n,m)\in \mathcal{P}} n 
			\end{bmatrix},
			\end{align}
			and
			\begin{align}\label{eq:D}
			\Db \triangleq  4\pi^2 \abs{\alpha}^2
			\begin{bmatrix}
			\Ts^2 \sum_{(n,m)\in \mathcal{P}} m^2   &  -\Ts \Delta f \sum_{(n,m)\in \mathcal{P}} nm\\
			- \Ts \Delta f \sum_{(n,m)\in \mathcal{P}} nm  &  \Delta f^2 \sum_{(n,m)\in \mathcal{P}} n^2
			\end{bmatrix}.
			\end{align}

			\section{Fisher Information Matrix Terms Calculation for Periodic Pilot Patterns}\label{sec:FIM_pilot_periodic}
			One can easily compute the following summations:
			\begin{align}
			\sum_{(n,m)\in\mathcal{P}} n &= (L+1) \sum_{k = 0}^{K} k n_p = \frac{K \aP n_p}{2}, \label{eq:PN} \\
			\sum_{(n,m)\in \mathcal{P}} m &= (K+1) \sum_{\ell = 0}^{L} \ell m_p = \frac{L \aP m_p}{2},\label{eq:PM} \\
			\sum_{(n,m)\in\mathcal{P}} nm & = \sum_{k = 0}^{K}\sum_{\ell = 0}^{L} k \ell n_p m_p = \frac{KL\aP m_pn_p}{4},\label{eq:PNM}\\
			\sum_{(n,m)\in\mathcal{P}} n^2 &= (L+1) \sum_{k = 0}^{K} k^2 n_p^2 = \frac{K(2K+1) \aP n_p^2}{6},\label{eq:PN2} \\
			\sum_{(n,m)\in\mathcal{P}} m^2 &= (K+1) \sum_{\ell = 0}^{L} \ell^2 m_p^2 = \frac{L(2L+1) \aP m_p^2}{6}. \label{eq:PM2}
			\end{align}
			By using \eqref{eq:PN}, \eqref{eq:PM} and \eqref{eq:PNM}, $Q_{NM}$ can be expressed as
			\begin{align}\label{eq:Q_NM_periodic}
			Q_{NM} = \frac{KL\aP m_pn_p}{4}- \frac{K \aP n_p}{2} \frac{L \aP m_p}{2} \frac{1}{\aP} = 0.
			\end{align}
			Similarly, by using \eqref{eq:PN}, \eqref{eq:PM}, \eqref{eq:PN2} and \eqref{eq:PM2}, we can write
			\begin{align}\label{eq:Q_N2_periodic}
			Q_{N^2}  &=  \frac{K(2K+1) \aP n_p^2}{6} - \frac{K^2 \aP n_p^2}{4}  = \frac{K (K+2)\aP n_p^2}{12}
			\end{align}
			and
			\begin{align}\label{eq:Q_M2_periodic}
			Q_{M^2}  &=  \frac{L(2L+1) \aP m_p^2}{6} - \frac{L^2 \aP m_p^2}{4}  = \frac{L (L+2)\aP m_p^2}{12}.
			\end{align}

	\bibliographystyle{IEEEtran}
	\bibliography{bibfile}
	
\end{document}